\newtheorem{theorem}{Theorem}
\newtheorem{lemma}[theorem]{Lemma}
\newtheorem{proposition}[theorem]{Proposition}
\theoremstyle{remark}
\newtheorem{remark}[theorem]{Remark}
\newcommand{\fz}{f(z,\lambda) }      
\newcommand{\im}{\mathrm{i}}
\newcommand{\e}{\mathrm{e}}
\renewcommand{\Re}{\operatorname{Re}}
\renewcommand{\Im}{\operatorname{Im}}
\title{The real spectrum of the imaginary cubic oscillator: An expository proof}
\author{Ilario Giordanelli, Gian Michele Graf \\Theoretische Physik, ETH Zurich, 8093 Zurich, Switzerland}
\begin{document}
\maketitle
\begin{abstract}
We give a partially alternate proof of the reality of the spectrum of the imaginary cubic oscillator in quantum mechanics.
\end{abstract}
\maketitle

\section{Introduction}
The imaginary cubic oscillator is defined by the Hamiltonian 
\begin{equation}
\label{hamiltonian}
H=p^2+\im x^3
\end{equation}
with $p=-\im d/dx$, which is manifestly not self--adjoint. Its spectrum is nevertheless real, as conjectured by Bessis and Zinn-Justin in 1992, numerically confirmed by Bender and Boettcher~\cite{beboe}, proven by Dorey et al.~\cite{dorey}, as well as by Shin~\cite{shin}. Recent related works are \cite{grema, henry, sk, zjj}, among others. 

The Hamiltonian has become a paradigmatic model of $PT$-symmetric quantum mechanics (see \cite{bender} for an introduction and \cite{most} for a critique). The joint operation $PT$ of parity $P$ and time--reversal $T$ is anti-unitary, acts as $(PT\psi)(x)=\overline{\psi(-x)}$, and leaves $H$ invariant. The spectrum of $H$ is invariant under complex--conjugation, that conclusion being easiest when the spectrum is discrete, as it is here. In fact, $PT$ maps the eigenspace of an eigenvalue to that of its complex--conjugate. If in addition all eigenstates of $H$ are also eigenstates of $PT$, then the spectrum is real, and the $PT$-symmetry is called unbroken. As mentioned, this is known to be the case for Eq.~(\ref{hamiltonian}).

The modest goal of this paper is to provide an expository proof of that result. To keep it reasonably short, we refrain from giving any extensions. The proof is self--contained except for using tools of general character. Its novelty is limited to technical aspects, with differences and similarities to previous proofs discussed in the next section.  

The full result is as follows. 

\begin{theorem}
\label{main_thm} Let the operator $H$ on $L^2(\mathbb{R})$ be given as by (\ref{hamiltonian}) with domain $\mathcal{D}(H)=\mathcal{D}(p^2)\cap\mathcal{D}(x^3) $. 
Then 
\begin{enumerate}[(a)]
\item $H$ is closed,
\item its spectrum is discrete 
\item with algebraically simple eigenvalues that 
\item are real and positive,
\item and infinite in number. Moreover, the eigenvalues are given
 asymptotically as $\lambda_n=\lambda^0_n +o(1)$, ($n\to\infty$),
 where
\begin{gather}\label{l0n}
\lambda^0_n=\Bigl(\frac{(2n+1)\pi}{\sqrt{3}K}\Bigr)^{6/5},\\
\label{K1}
K=2\int_0^1 \sqrt{1-t^3}dt.
\end{gather}
\end{enumerate}
\end{theorem}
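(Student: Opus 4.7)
The plan is to treat the five assertions largely in order, grouping (a)--(b) together and then handling (c)--(d)--(e) in turn.

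For (a) and (b), I would start with the quadratic form
\begin{equation*}
h(\psi) = \|p\psi\|^2 + \im\int x^3|\psi|^2\,dx,
\end{equation*}
which is sectorial since $\Re h(\psi) = \|p\psi\|^2 \ge 0$, while $|\Im h(\psi)|$ can be controlled relative to $\Re h(\psi) + \|\psi\|^2$ by interpolation between $\|p\psi\|^2$ and $\int x^6|\psi|^2\,dx$. Sectoriality then yields a closed m-sectorial operator, and identifying it with $H$ on $\mathcal{D}(p^2)\cap\mathcal{D}(x^3)$ is a standard computation using commutator estimates such as $[p^2,\im x^3]$. For discreteness I would establish an a priori inequality of the form $\|p\psi\|^2 + \bigl\||x|^{3/2}\psi\bigr\|^2 \le C(\|H\psi\|^2+\|\psi\|^2)$ and conclude compactness of the resolvent via the compact embedding of the associated weighted Sobolev space into $L^2(\mathbb R)$ (Rellich).

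For (c), set up the eigenvalue equation $-\psi''+\im x^3\psi = \lambda\psi$ in the complex $x$-plane. Standard irregular-singularity theory provides subdominant solutions $\psi_\pm(x,\lambda)$ decaying as $x\to\pm\infty$ along $\mathbb R$, entire in $\lambda$. Their Wronskian $W(\lambda) = \psi_+\psi_-' - \psi_+'\psi_-$ is entire in $\lambda$ and vanishes exactly at the eigenvalues of $H$. Algebraic simplicity then amounts to simplicity of these zeros, which follows from an identity of the form $\partial_\lambda W(\lambda_n) \propto \int\psi_+\psi_-\,dx$ (no conjugation) together with the fact that this bilinear pairing is non-zero at an eigenvalue.

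For (d), reality is the main obstacle and will consume the bulk of the argument. $PT$-symmetry alone only gives that the spectrum is invariant under conjugation. To upgrade this, I would combine it with the simplicity from (c) and the non-degeneracy of $\int\psi^2\,dx$ for eigenfunctions $\psi$: if each eigenfunction is $PT$-invariant up to a phase, then $H\psi = \lambda\psi$ forces $\lambda = \bar\lambda$. Establishing this $PT$-invariance requires genuinely non-trivial input---either a deformation argument tracking eigenvalues along a family interpolating with a self-adjoint limit (in the spirit of Bender--Boettcher and Dorey--Tateo--Tateo), or a contour-rotation argument identifying the spectral problem with a self-adjoint one on a Stokes-wedge contour in the complex $x$-plane (in the spirit of Shin). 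Once reality is known, positivity is immediate from accretivity $\Re\langle\psi,H\psi\rangle = \|p\psi\|^2 \ge 0$, with $\lambda=0$ excluded by observing that $H\psi=0$ would force $p\psi=0$, hence $\psi$ constant, hence $\psi=0$ in $L^2(\mathbb R)$.

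For (e), rescale via $x = \lambda^{1/3}y$ to reduce the eigenvalue equation to a semiclassical problem with small parameter $\lambda^{-5/6}$. A complex WKB (Bohr--Sommerfeld) analysis along a contour $\mathcal{C}$ joining the two turning points in the lower half $y$-plane (so as to skirt the third turning point at $y=\im$) produces the quantization condition
\begin{equation*}
\lambda^{5/6}\int_{\mathcal{C}} \sqrt{1-\im y^3}\,dy = (n+\tfrac12)\pi + o(1) \qquad (n\to\infty).
\end{equation*}
Evaluating the contour integral by deformation and symmetry reduces it to the real Abelian integral $\int_0^1\sqrt{1-t^3}\,dt$, the geometric factor $\sqrt{3}/2$ emerging from the $2\pi/3$-symmetric placement of the three turning points; this produces the stated formulae for $\lambda^0_n$ and $K$, and the existence of infinitely many eigenvalues is a by-product. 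The hardest step is (d): the reality of the spectrum is the original source of interest in the problem, and any expository simplification must still convey its essential analytic content.
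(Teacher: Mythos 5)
Your sketch of (a)--(b) via sectorial quadratic forms and Rellich compactness is sound and close in spirit to the paper, which instead works directly at the operator level through the inequality
\begin{equation*}
\|(p^2+\im x^3)\psi\|^2\ge\tfrac12\|p^2\psi\|^2+\tfrac12\|x^3\psi\|^2-C\|\psi\|^2
\end{equation*}
to get closedness and then maximal accretivity and Rellich for discreteness; both routes work. Your rescaled WKB outline for (e) is also correct in broad direction, although a proof would still need uniform error control and a Rouch\'e-type argument to conclude that all large eigenvalues are captured.

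The genuine gaps are in (c) and (d), which are the heart of the theorem. For (c) you reduce algebraic simplicity to nonvanishing of $\partial_\lambda W(\lambda_n)\propto\int\psi_+\psi_-\,dx$, and then invoke ``the fact that this bilinear pairing is non-zero at an eigenvalue.'' That is not a fact one can invoke: at an eigenvalue $\psi_+\propto\psi_-$, so $\int\psi_+\psi_-\,dx\propto\int\psi^2\,dx$, and the vanishing of this $PT$-norm is precisely the algebraic signature of an exceptional point. Asserting its nonvanishing is therefore circular with what you are trying to prove. For (d) you correctly identify reality as the crux, but you then only \emph{list} two families of known strategies (deformation from a self-adjoint limit, or contour rotation) without developing either; that is a bibliography, not a proof. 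The paper's actual mechanism, which you do not anticipate, is to form the ratio $h(\lambda)=\psi_+(0,\lambda)/\psi_-(0,\lambda)$, show it is analytic in $\Im\lambda>0$ (no zeros of $\psi_-(0,\cdot)$ there, by a Dirichlet-eigenvalue positivity argument), satisfies $|h|=1$ on the real axis by $PT$-symmetry, and has growth of order $5/6<1$; the Phragm\'en--Lindel\"of principle then gives $|h(\lambda)|<1$ strictly for $\Im\lambda>0$. If $\lambda$ were an eigenvalue with $\Im\lambda>0$, one would have $h(\lambda)=-\omega^{-1}$ and hence $|h(\lambda)|=1$, a contradiction; reality follows, and positivity is then the easy accretivity argument you already give. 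The same strict bound $|h|<1$ also drives (c): algebraic multiplicity $>1$ would force $\dot h(\lambda_0)=0$, and a local Taylor expansion of $h$ at $\lambda_0$ then produces points in the upper half-plane with $|h(\lambda)|>1$, again contradicting Phragm\'en--Lindel\"of. So in the paper (c) and (d) are two consequences of one estimate, whereas in your outline they are both left open.
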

The main point is the unbroken $PT$-symmetry, meaning the reality of
the eigenvalues. Item (c) is due to \cite{trinh1, trinh2}, in that simplicity holds true even in the algebraic sense. Item (d) describes only the large eigenvalues; in particular it is not claimed that $\lambda_n$ is the $n$-th eigenvalue in increasing order when counting from $n=0$. However in \cite{bender} that claim is clearly supported numerically and, moreover, it is observed that the asymptotics is semi-classical, with Eq.~(\ref{l0n}) being the WKB-approximation for the eigenvalues. Among the three complex turning points $x$ at energy $\lambda>0$, i.e., $\im x^3=\lambda$, the two relevant ones are $x_+=\e^{-\im\pi/6}\lambda^{1/3}$ and $x_-=\e^{-5\pi\im/6}\lambda^{1/3}$, whereas $x_0=\im\lambda^{1/3}$ is not, because the solution $\psi_\pm(x)$ decaying at $x\to\pm\infty$ does so in a sector also containing $x_\pm$, but not $x_0$ (as may be inferred from Eq.~(\ref{Psi_pm}) below). The Sommerfeld condition is 
\begin{equation}
\label{somm}
2\int_{x_-}^{x_+}pdx=2\pi\bigl(n+\frac{1}{2}\bigr)
\end{equation}
with $p=\sqrt{\lambda-\im x^3}$. The path joining $x_-$ to $x_+$ can be taken as the polygonal chain $x_-, 0, x_+$. The l.h.s. then equals $(\e^{-\im\pi/6}-e^{-5\pi\im/6})K\lambda^{5/6}=\sqrt{3}K\lambda^{5/6}$ and the solutions of (\ref{somm}) are given by (\ref{l0n}).
\section{Main steps of the proof}\label{ms}
The structure of the proof owes a great deal to work \cite{shin} of
Shin and indirectly to \cite{dorey}, in particular in regard to the asymptotic properties of the solution of the eigenvalue equation viewed as an ordinary differential equation. Parts (a, b) can be traced back to \cite{cali}. The present approach differs however from \cite{shin} on two counts.

First, these properties are derived from a general result of Olver \cite{olver} on the Liouville-Green approximation, which in our opinion makes for a more straightforward route than the proof \cite{sibuya} cited in \cite{shin}. (The same observation was very recently made in \cite{henry}.) Second, these properties are used somewhat differently in order to conclude that $PT$-symmetry is unbroken, and in fact by way of the Phragm\'en-Lindel\"of principle. 

We focus on property (d) of Theorem~\ref{main_thm}, this being the main result. The proofs of parts (a-c) and (e) are deferred to Section 3; likewise those of the statements made below, if not immediately given. In view of (a), any solution of the eigenvalue equation $H\psi=\lambda \psi$ is a classical solution of the ordinary differential equation (ODE)
\begin{equation}
\label{DE}
-\psi''(x)+\im x^3\psi(x)=\lambda \psi(x)
\end{equation}
with $\psi \in L^2(\mathbb{R})$. We disregard that normalizability condition for the time being. The (so far) heuristic replacement 
\begin{equation}
\label{f_psi_relation}
\psi(x)=f(\im x)
\end{equation}
 turns the ODE into 
 \begin{equation}
 \label{DE_turned}
 -f''(z)+(z^3+\lambda)f(z)=0.
 \end{equation}
The following proposition (cf.~\cite{shin}, Prop.~5) states that a particular solution can be characterized by its asymptotics as $z$ tends to $\infty$ within the sector $|\arg{z}|<3\pi/5$.

\begin{proposition}
\label{Properties_of_sol_f}
Eq.~(\ref{DE_turned}) admits a solution $f(z,\lambda)$ with the following properties: 
\begin{enumerate}[i)]
\item $f(z,\lambda)$ is an entire function of $z$, $\lambda$.
\item $f$ has the asymptotics 
\begin{equation}
\label{asympt}
f(z,\lambda) =z^{-3/4}\e^{-(2/5)z^{5/2}}(1+O(z^{-1/2}))
\end{equation} 
as $z\to \infty$ within $|\arg{z}|\leq 3\pi/5-\varepsilon$, $(\varepsilon>0)$ uniformly on compact sets in $\lambda$.
\item Property (ii) uniquely determines the solution $f$. In particular, $$\overline{f(z,\lambda)}=f(\bar{z},\bar{\lambda}).$$
\item 
\begin{equation}
\label{asymp_f(0,lambda)}
f(0,\lambda)=\lambda^{-1/4}\e^{K\lambda^{5/6}}(1+O(\lambda^{-5/6}))
\end{equation}
as $\lambda \to \infty$ in the sector $|\arg{\lambda}|\leq
\pi-\delta$, $(\delta >0)$, with $K$ as in (\ref{K1}).
\item For any $\tilde{K}>K $ and some $C>0$, 
\begin{equation}
\label{upper_bondf}
|f(0,\lambda)|\leq C \e^{\tilde{K}|\lambda|^{5/6}},\qquad(\lambda\in\mathbb{C}).
\end{equation}
\end{enumerate}
\end{proposition}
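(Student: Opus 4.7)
The plan is to construct $f$ directly as the Liouville--Green (LG) subdominant solution at $z=+\infty$, extracting (i)--(iii) from a single application of Olver's theorem~\cite{olver}, (iv) from a second application with $|\lambda|$ as the large parameter, and (v) from a Phragm\'en--Lindel\"of argument. Writing the ODE as $f''(z)=Q(z)f(z)$ with $Q(z)=z^3+\lambda$, the LG phase $\xi(z)=\int^z\sqrt{Q}\,ds$ behaves like $(2/5)z^{5/2}$ at large $z$, so the Stokes lines $\Re\,z^{5/2}=0$ occur at $\arg z=\pi/5+2k\pi/5$. The sector $|\arg z|<3\pi/5$ in the proposition is precisely the region bounded by the first pair of Stokes lines on either side of the central subdominant sector $|\arg z|<\pi/5$, and it is this geometry that determines the opening $3\pi/5$.

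For (i)--(iii) I would fix a region $R\subset\{|\arg z|\leq 3\pi/5-\varepsilon\}$ admitting, through every point, a progressive path to infinity (one along which $\Re\xi$ is monotone). On such paths Olver's error-control function $\int|Q^{-1/4}(d/dz)^{2}Q^{-1/4}|\,|dz|$ is $O(|z|^{-1/2})$, and the Volterra integral equation for the correction $h$ in the ansatz $f=Q^{-1/4}\e^{-\xi}h$ converges by Neumann iteration; since each iterate is entire in $(z,\lambda)$ and the convergence is uniform on compacts in $\lambda$, the sum simultaneously delivers (i) and (ii). Uniqueness in (iii) follows at once: a competing solution would differ from $f$ by a multiple of the dominant LG solution $\sim z^{-3/4}\e^{+(2/5)z^{5/2}}$, and this multiple would spoil (\ref{asympt}) in the central subsector $|\arg z|<\pi/5$ unless it vanishes. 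The conjugation identity then follows because $\overline{f(\bar z,\bar\lambda)}$ satisfies the same ODE and the same (conjugation-symmetric) asymptotic.

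For (iv) I would rescale $z=\lambda^{1/3}\zeta$ to obtain the semiclassical form $-f_{\zeta\zeta}+\lambda^{5/3}(\zeta^{3}+1)f=0$, free of turning points on the positive real $\zeta$-axis for $|\arg\lambda|\leq\pi-\delta$. A second application of Olver with large parameter $\lambda^{5/3}$, matched at $\zeta=\infty$ to the normalization (\ref{asympt}), yields
\[
f(0,\lambda)=\lambda^{-1/4}\exp\!\Bigl(\int_{0}^{\infty}\bigl(\sqrt{s^{3}+\lambda}-s^{3/2}\bigr)ds\Bigr)\bigl(1+O(\lambda^{-5/6})\bigr).
\]
The substitution $s=\lambda^{1/3}t$ turns the exponent into $\lambda^{5/6}J$ with $J=\int_{0}^{\infty}(\sqrt{t^{3}+1}-t^{3/2})\,dt$, and integration by parts together with the identity $t^{3}/\sqrt{t^{3}+1}=\sqrt{t^{3}+1}-1/\sqrt{t^{3}+1}$ reduces $J$ to $\tfrac{3}{5}\int_{0}^{\infty}(t^{3}+1)^{-1/2}dt$. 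A Beta-function computation then evaluates both $J$ and the integral defining $K$ in (\ref{K1}) to the common value $\Gamma(1/3)\Gamma(1/6)/(5\sqrt{\pi})$, so the exponent equals $K\lambda^{5/6}$ as required.

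Finally, for (v), $\lambda\mapsto f(0,\lambda)$ is entire by (i) and bounded by $Ce^{\tilde K|\lambda|^{5/6}}$ on $|\arg\lambda|\leq\pi-\delta$ by (iv). A crude bound of the same exponential order on the complementary sliver around $\arg\lambda=\pm\pi$ can be obtained by rerunning Olver on a $z$-contour rotated to keep the turning points off the integration path; Phragm\'en--Lindel\"of applied to the sliver, whose opening $2\delta$ is well below the threshold $\pi/(5/6)=6\pi/5$, then tightens this to the asserted bound. The chief technical obstacle, in my view, is the first application of Olver: verifying that progressive paths with integrable error-control function reach all the way up to the Stokes lines $\arg z=\pm 3\pi/5$, which is what fixes the sector opening in (ii) and thereby every later choice. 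Everything beyond that step is routine complex-analytic bookkeeping.
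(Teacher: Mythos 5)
Your high-level route matches the paper's: apply Olver's Liouville--Green theorem to get (i)--(iii), reapply it with $\lambda$ as a large parameter for (iv), and finish (v) with Phragm\'en--Lindel\"of. But several of the concrete technical choices differ in ways that matter, and one is a genuine gap.

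\emph{Parts (i)--(iii).} You set $Q(z)=z^3+\lambda$ and take $\xi=\int^z\sqrt Q$. The paper instead decomposes $q(z)=z^3$, $\tilde q(z)=\lambda$, so that $\xi=(2/5)z^{5/2}$ is $\lambda$-independent and $q$ has its only zero at $z=0$, excluded from the domain. Your choice introduces turning points $z_0$ with $z_0^3=-\lambda$ that do lie inside $|\arg z|<3\pi/5$ for generic $\lambda$ (e.g.\ at $\arg z=\pm\pi/3$ when $\lambda>0$); the progressive paths from infinity must then dodge a $\lambda$-dependent bounded set where $Q^{-1/4}(Q^{-1/4})''$ is singular. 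That is fixable for $\lambda$ on compacts, but you do not address it, and it also complicates analyticity in $\lambda$: with $Q=z^3+\lambda$ the phase $\xi$, the prefactor $Q^{-1/4}$, and the integration paths all move with $\lambda$, whereas the paper's split keeps those fixed and pushes all $\lambda$-dependence into $\tilde q$, so analyticity of the fixed point follows immediately (its Remark~\ref{Rem8}). This is the one point I would ask you to repair.

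\emph{Part (iv).} Your rescaling $z=\lambda^{1/3}\zeta$ is essentially equivalent to the paper's redecomposition $q=z^3+\lambda_0$, $\tilde q=\lambda-\lambda_0$ with $\lambda_0=\lambda$; both give the phase $\int_0^\infty(\sqrt{s^3+\lambda}-s^{3/2})\,ds$. Your verification that the resulting constant $J$ equals $K$ is a genuinely different, and quite clean, computation: integration by parts and the identity $t^3/\sqrt{t^3+1}=\sqrt{t^3+1}-1/\sqrt{t^3+1}$ reduce $J$ to $\tfrac35\int_0^\infty(t^3+1)^{-1/2}\,dt$, and both $J$ and $K$ evaluate to $\Gamma(1/3)\Gamma(1/6)/(5\sqrt\pi)$ via Beta integrals. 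The paper instead establishes $K=K'$ indirectly by comparing boundary values of (\ref{K2}) as $\lambda_0\to -1$ from above and below; your argument is more direct and arguably more transparent.

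\emph{Part (v).} The Phragm\'en--Lindel\"of scaffolding is the same, but the essential ingredient is the a~priori bound in the sliver $|\arg(-\lambda)|<\delta$, and here your sketch is too thin. ``Rerunning Olver on a rotated $z$-contour'' does not obviously produce any bound: when $\lambda$ is near the negative reals, a turning point of $z^3+\lambda$ sits on or near the positive $z$-axis, the error-control integral diverges there, and it is not clear how a rotated path reaching $z=0$ from infinity stays $\xi$-progressive. The paper's device is specific and worth noticing: keep $\lambda_0>0$ \emph{fixed} in the decomposition $q=z^3+\lambda_0$, $\tilde q=\lambda-\lambda_0$, so the turning points never move and the error control gives $|\varepsilon(0,\lambda;\lambda_0)|\le \e^{v(0,\lambda_0)}\e^{|\lambda-\lambda_0|\tilde v(0,\lambda_0)}$, hence a bound $|f(0,\lambda)|\le C_1\e^{C_2|\lambda|}$ of order $b=1$ valid for \emph{all} $\lambda$. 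That is enough for Phragm\'en--Lindel\"of on a sliver of opening $2\delta<\pi$, and it is crucially not ``of the same exponential order'' $5/6$ --- that would be circular, since a $|\lambda|^{5/6}$-order bound in the sliver with an explicit constant is essentially what (v) asserts.

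In short: same skeleton, a nice independent evaluation of $K$, but a turning-point issue in (i)--(iii) that the paper's decomposition is designed to avoid, and a missing construction for the uniform growth bound needed in (v).
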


Here and elsewhere roots are defined by the standard branch with cut $(-\infty, 0]\subset \mathbb{C}$.

We observe that $\fz$ is decaying in $z$ in the Stokes sector 
$$S_0=\lbrace z\in \mathbb{C} \mid |\arg{z}|<\pi/5\rbrace $$
and diverging in $S_\pm=\e^{\pm 2\pi \im/5 }S_0$. Further solutions of (\ref{DE_turned}) are obtained by substitution in the form of $f(\omega^{-1}z,\omega^{-3}\lambda)$ provided $\omega^5=1$, and hence $\omega^{-5}=1$. We pick $\omega=\e^{2\pi \im/5}$ and set 
\begin{equation}
\label{f_pm}
f_\pm(z,\lambda):=f(\omega^{\mp 1} z,\omega^{\mp 3}\lambda).
\end{equation} 
Thus $f_\pm$ is decaying in $S_\pm$ and diverging in $S_0$. Moreover, 
\begin{equation}
\label{barf+_f_-}
\overline{f_+(z,\lambda)}=f_-(\bar{z},\bar{\lambda}).
\end{equation} 
\begin{lemma}
\label{stokes_coeff} There is $C(\lambda)\in\mathbb{C}$ such that
\begin{equation}
\label{sc}
f_-(z,\lambda)+\omega f_+(z,\lambda)=C(\lambda)f(z,\lambda).
\end{equation} 
\end{lemma}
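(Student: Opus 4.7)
My plan is to reduce the lemma to a single Wronskian identity. Since (\ref{DE_turned}) is a linear second-order ODE with no first-derivative term, the Wronskian $W(g_1, g_2) := g_1 g_2' - g_1' g_2$ of any two solutions is constant in $z$. Consequently, as $f\not\equiv 0$, the asserted proportionality of $f_-+\omega f_+$ to $f$ reduces to the single scalar equality
\begin{equation*}
W(f_-,f) + \omega\, W(f_+, f) = 0,
\end{equation*}
and the constant in (\ref{sc}) will then be $C(\lambda)=(f_-(z,\lambda)+\omega f_+(z,\lambda))/f(z,\lambda)$ evaluated at any $z$ where the denominator does not vanish.

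The next step is to evaluate these two Wronskians from (\ref{asympt}) along the positive real axis. For $z>0$ the arguments $\omega^{\mp 1}z$ satisfy $|\arg|=2\pi/5 < 3\pi/5-\varepsilon$, so (\ref{asympt}) applies; by termwise differentiation, justified within Olver's Liouville--Green framework, the analogous asymptotic holds for $f'$. The bookkeeping relies on the principal-branch identities $(\omega^{\mp 1}z)^{-3/4} = \omega^{\pm 3/4}z^{-3/4}$ and, crucially, $(\omega^{\mp 1}z)^{5/2} = -z^{5/2}$, the latter because $\omega^{5/2} = \e^{\im\pi} = -1$. Under these substitutions the decaying exponential in (\ref{asympt}) becomes the growing $\e^{(2/5)z^{5/2}}$ in $f_\pm$, and a direct computation yields $W(f_\pm,f) = -2\omega^{\pm 3/4}$; this is an exact equality, not merely asymptotic, since both sides are constants in $z$.

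The final cancellation is driven once more by $\omega^{5/2}=-1$, this time in the form $\omega^{7/4} = -\omega^{-3/4}$:
\begin{equation*}
W(f_-,f) + \omega\, W(f_+, f) \;=\; -2\omega^{-3/4} - 2\omega^{7/4} \;=\; 0.
\end{equation*}
I expect the main obstacle to be bookkeeping: correctly tracking principal branches of fractional powers through the substitutions $z\mapsto \omega^{\mp 1}z$ and verifying that (\ref{asympt}) may be differentiated in $z$ term by term. Neither is deep, but an error in signs or phases would destroy the cancellation on which the specific coefficient $\omega$ in (\ref{sc}) entirely depends, reflecting the $\mathbb{Z}/5$ rotational symmetry of (\ref{DE_turned}).
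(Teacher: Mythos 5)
Your Wronskian argument is correct and genuinely different from the paper's. The paper argues at the level of asymptotic orders: the same $\omega$-identities show that the leading terms of $f_- + \omega f_+$ cancel, so this combination is $o(f_+)$ in the sector $|\arg z| \le \pi/5 - \varepsilon$; being a solution of (\ref{DE_turned}) it is a linear combination of $f$ and $f_+$, and the coefficient of $f_+$ must vanish because $f = o(f_+)$ as well. Your version packages the same cancellation as the single scalar equality $W(f_-,f) + \omega W(f_+,f) = 0$, which has the appeal of reducing the lemma to a finite computation rather than an order-of-magnitude argument, and which produces the (here unneeded) exact values $W(f_\pm,f) = -2\omega^{\pm 3/4}$ as a by-product. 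The trade-off is the one you flag yourself: you need the asymptotics of $f'$, not just $f$, and Proposition~\ref{Properties_of_sol_f} only states the latter. Your appeal to Olver's framework is legitimate — Theorem~6.11.1 of \cite{olver} does control the derivative alongside $f$ — but it is worth saying explicitly that this is a fact about the construction of $f$ in the proof of Proposition~\ref{Properties_of_sol_f}, not something that follows from the bare statement (\ref{asympt}); otherwise a reader could object that, e.g., Cauchy estimates from (\ref{asympt}) alone would give an upper bound on $f'$ but not its leading coefficient, which is exactly what the Wronskian computation needs. The paper's route avoids touching $f'$ entirely, which is why it can cite only (\ref{asympt}). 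Both are fine; yours is a cleaner reduction, theirs is more economical in its inputs.
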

\begin{proof} The asymptotics (\ref{asympt}) is available directly or through (\ref{f_pm}) for all three terms within the sector $|\arg{z}|\leq \pi/5-\varepsilon$, $(\varepsilon>0)$. There $(\omega^{\mp1}z)^{5/2}=\omega^{\mp5/2}z^{5/2}$ with $\omega^{-5/2}=\omega^{5/2}=-1$, whence 
$$f_\pm(z,\lambda)=\omega^{\pm 3/4}z^{-3/4}\e^{+(2/5)z^{5/2}}(1+O(z^{-1/2})).$$
Because of $\omega\cdot\omega^{3/4}=-\omega^{-3/4}$, the leading terms cancel in the asymptotics of the l.h.s. in (\ref{sc}), leaving it equal to $z^{-3/4}\e^{(2/5)z^{5/2}}O(z^{-1/2})=o(f_\pm(z))$.
Being moreover a solution of (\ref{DE_turned}), that l.h.s. is a linear combination of $f$ and $f_+$. However the weight of the latter must vanish, because $f(z)=o(f_+(z))$, too.
\end{proof}
Solutions of (\ref{DE}) can be obtained from $f$: The function $f(\alpha z, -\alpha^{-2}\lambda)$ with $z=x$ satisfies that ODE provided $\alpha^5=\im=(-\alpha^{-1})^5$. The choice $\alpha=\im$, $-\alpha^{-2}=1$ corresponds to (\ref{f_psi_relation}), but the solution diverges for $x\to \pm \infty$, since $\im x\to \infty$ in $S_\pm$. Still, that shows that $\lambda$ can be an at most simple eigenvalue of $H$, geometrically speaking. Henceforth we take $\alpha=\e^{\im\pi/10}$ instead. Then the solutions 
\begin{equation}
\label{Psi_pm}
\psi_\pm(z,\lambda):=f(\pm \alpha^{\pm 1}z, -\alpha^{\mp 2}\lambda)
\end{equation}
satisfy the boundary condition: $\psi_\pm(x,\lambda)$ is $L^2$ near $x=\pm\infty$, because $|\pm\pi/10|<\pi/5$. These solutions of (\ref{DE}) can also be written as 
$$\psi_\pm(z,\lambda)=f_\pm(\pm \omega^{\pm 1} \alpha^{\pm 1}z, - \omega^{\pm 3} \alpha^{\mp 2}\lambda)=f_\pm(\im z,\lambda)$$ because $\omega\alpha=\im$, $\omega^3\alpha^{-2}=-1$. 
We remark that (\ref{barf+_f_-}) is restated as 
\begin{equation}
\label{bar_psi_+psi_-}
\overline{\psi_+(-\bar{z},\bar{\lambda})}=\psi_-(z,\lambda),
\end{equation}
reflecting the $PT$-symmetry of Eq.~(\ref{DE}).

\begin{lemma}
\label{Prop_psi_pm}
Let $\lambda\in \mathbb{C}$. 
\begin{enumerate}[i)]
\item Then $\lambda$ is an eigenvalue of $H$ iff $\psi_-(z,\lambda)=\gamma \psi_+(z,\lambda)$, $(z\in \mathbb{C})$ for some $\gamma \in \mathbb{C}$.
\item If so, $ \gamma=-\omega$ and $C(\lambda)=0$ in Eq.~(\ref{sc}).
\item If so, $\psi_-(0,\lambda)=-\omega \psi_+(0,\lambda)$. Conversely, that condition is sufficient for $\lambda$ to be an eigenvalue, provided $|\arg{\lambda}|\leq \pi-\delta$, ($\delta>0$) and $|\lambda|$ is large enough.
\end{enumerate}
\end{lemma}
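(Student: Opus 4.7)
My plan is to handle the three parts in order, leveraging the identity (\ref{sc}) from Lemma~\ref{stokes_coeff} together with the sectorial behaviour of $f_\pm$.

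For part (i), I would argue that $\psi_+$ (resp.\ $\psi_-$) is, up to a scalar, the unique solution of (\ref{DE}) that is $L^2$ near $+\infty$ (resp.\ $-\infty$). The key observation is that $\psi_\pm(x,\lambda)=f_\pm(\im x,\lambda)$, and as $x\to+\infty$ the argument $\im x$ tends to infinity along the ray $\arg z=\pi/2$, which lies in $S_+$, whereas as $x\to-\infty$ it enters $S_-$. Hence $\psi_+$ decays super-exponentially at $+\infty$ but is dominant at $-\infty$, and vice versa for $\psi_-$. An $L^2$-solution of (\ref{DE}) must be a scalar multiple of $\psi_+$ near $+\infty$ and of $\psi_-$ near $-\infty$, forcing $\psi_-=\gamma\psi_+$ for some $\gamma\in\mathbb{C}$. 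Conversely, such a $\gamma$ produces a solution which is super-exponentially decaying at both infinities by (\ref{asympt}), hence an eigenfunction in $\mathcal{D}(H)$.

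For part (ii), given $\psi_-=\gamma\psi_+$, the substitution $w=\im z$ in $\psi_\pm(z,\lambda)=f_\pm(\im z,\lambda)$ turns the hypothesis into $f_-(w,\lambda)=\gamma f_+(w,\lambda)$ identically in $w$. Inserting this into (\ref{sc}) yields
\[
(\gamma+\omega)\,f_+(w,\lambda)=C(\lambda)\,f(w,\lambda).
\]
Since $f$ is subdominant in $S_0$ while $f_+$ is dominant there, these two solutions of (\ref{DE_turned}) are linearly independent, so both coefficients must vanish: $\gamma=-\omega$ and $C(\lambda)=0$.

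For part (iii), the forward implication is obtained by evaluating (ii) at $z=0$. The substantial content lies in the converse. Evaluating (\ref{sc}) at $z=0$ reads
\[
f_-(0,\lambda)+\omega f_+(0,\lambda)=C(\lambda)\,f(0,\lambda),
\]
so the hypothesis $\psi_-(0,\lambda)+\omega\psi_+(0,\lambda)=0$ is equivalent to $C(\lambda)\,f(0,\lambda)=0$. The asymptotic (\ref{asymp_f(0,lambda)}) of Proposition~\ref{Properties_of_sol_f} guarantees $f(0,\lambda)\neq 0$ once $|\lambda|$ is large enough within $|\arg\lambda|\leq\pi-\delta$, hence $C(\lambda)=0$. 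Then (\ref{sc}) itself reduces to $f_-=-\omega f_+$ globally in $z$, so $\psi_-(z,\lambda)=-\omega\psi_+(z,\lambda)$ for all $z$, and part (i) closes the argument.

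The main obstacle is mostly bookkeeping: identifying the Stokes sectors accurately so that the dominance/subdominance step in $S_0$ is valid, and checking that the sector $|\arg\lambda|\leq\pi-\delta$ in (\ref{asymp_f(0,lambda)}) is precisely what is needed in the converse of (iii) to rule out zeros of $f(0,\cdot)$. Beyond these matchings no new input is required aside from Proposition~\ref{Properties_of_sol_f} and Lemma~\ref{stokes_coeff}.
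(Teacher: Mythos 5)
Your overall strategy matches the paper's, and parts (ii) and (iii) are essentially verbatim what the paper does: restate the hypothesis as $f_-=\gamma f_+$, feed it into (\ref{sc}), use linear independence of $f$ and $f_+$, and for (iii) evaluate at $z=0$ and invoke the nonvanishing of $f(0,\lambda)$ from (\ref{asymp_f(0,lambda)}).

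There is, however, a flaw in your justification of part (i). You write that ``$\psi_+$ decays super-exponentially at $+\infty$ but is dominant at $-\infty$,'' and that this follows because $\im x$ enters $S_-$ as $x\to-\infty$. But the asymptotics of Proposition~\ref{Properties_of_sol_f}(ii), and the derived fact that $f_+$ decays in $S_+$ and diverges in $S_0$, say nothing about the behaviour of $f_+$ in $S_-$. Worse, the assertion that $\psi_+$ is dominant at $-\infty$ is precisely the \emph{negation} of what part (i) concludes in the case at hand: when $\lambda$ \emph{is} an eigenvalue, $\psi_+$ decays at $-\infty$ too. So this step is circular in the direction you need. What is actually used is only that the solutions of (\ref{DE}) that are $L^2$ near $+\infty$ (resp.\ $-\infty$) form a one-dimensional space spanned by $\psi_+$ (resp.\ $\psi_-$). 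The paper secures this via an independent comparison solution that diverges at \emph{both} ends, namely $f(\im x,\lambda)$, whose divergence is unconditional because $\im x\to\infty$ in $S_\pm$ and $f$ diverges there; this is the earlier remark on geometric simplicity. Alternatively one can cite the standard Stokes fact that in each sector the subdominant solution is unique up to scalars and every other solution grows. Either repair is short, but as written your ``dominant at $-\infty$'' claim is neither proved nor even true in the eigenvalue case, so the argument for the forward implication of (i) does not close.

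A second, much smaller point: for the converse in (i), ``super-exponential decay at both ends'' shows $\psi$, $x^3\psi$, and (via the ODE) $\psi''$ are all $L^2$, which is what places $\psi\in\mathcal D(p^2)\cap\mathcal D(x^3)=\mathcal D(H)$; stating this explicitly would be cleaner, though the paper also dismisses it as obvious.
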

The first part of (iii) is obvious, given (i, ii). The second part should be contrasted with the fact that, as a rule, it also takes a matching of the first derivatives to make a sufficient condition. 

\begin{proof} i) If $\lambda$ is an eigenvalue then it is geometrically simple, as remarked, and the normalizable solution is unique up to multiples. One implication is thus proved and its converse is obvious. 

ii) Condition (i) is restated as
$$f_-(z,\lambda)=\gamma f_+(z,\lambda)$$ 
after replacing $\im z$ by $z$. Thus (\ref {sc}) yields $(\gamma+\omega)f_+=C(\lambda)f$. By the linear independence of $f$ and $f_+$, the coefficients vanish. 
 
iii) The matching condition and (\ref {sc}) imply $C(\lambda)f(0,\lambda)=0$. However $f(0,\lambda)\neq 0$ for the stated $\lambda$ by (\ref{asymp_f(0,lambda)}).\end{proof}
\begin{lemma}
Let $\psi_+(0,\lambda)=0$. Then $\Re\lambda,\, \Im\lambda >0$.
\end{lemma}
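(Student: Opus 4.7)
The approach is to treat $\psi_+(\cdot, \lambda)$, under the hypothesis $\psi_+(0,\lambda)=0$, as a solution of the Dirichlet problem for (\ref{DE}) on the half-line $(0, \infty)$, and to read off the signs of $\Re\lambda$ and $\Im\lambda$ separately from the standard quadratic-form identity. The virtue of this setup is that on a half-line the cubic weight $\im x^3$ has a definite sign, and the non-self-adjointness becomes an asset rather than an obstruction.

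Concretely, I would multiply (\ref{DE}) by $\overline{\psi_+}$ and integrate over $(0,\infty)$. Since $\psi_+(0,\lambda)=0$ by hypothesis and $\psi_+,\psi_+'$ decay super-exponentially as $x\to+\infty$---which follows from $\psi_+(x,\lambda)=f(\alpha x,-\alpha^{-2}\lambda)$ combined with (\ref{asympt}), noting that along the ray $\arg z = \pi/10\subset S_0$ one has $\Re(\alpha^{5/2}x^{5/2}) = x^{5/2}\cos(\pi/4) > 0$---one integration by parts kills all boundary terms and yields
\[
\lambda \int_0^\infty |\psi_+|^2\, dx \;=\; \int_0^\infty |\psi_+'|^2\, dx \;+\; \im \int_0^\infty x^3 |\psi_+|^2\, dx.
\]
Separating real and imaginary parts expresses $\Re\lambda$ and $\Im\lambda$ as ratios of manifestly non-negative integrals. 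Both are in fact strictly positive: the common denominator $\int |\psi_+|^2\,dx$ is positive because $\psi_+$ is a nontrivial solution of a linear second-order ODE and hence cannot vanish on any nondegenerate subinterval of $(0,\infty)$; the same reason gives $\int_0^\infty x^3|\psi_+|^2\,dx > 0$, and $\int_0^\infty |\psi_+'|^2\,dx>0$ as well, since $\psi_+$ is not constant on any subinterval.

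The only step requiring care is the vanishing of the boundary contribution $\overline{\psi_+(x)}\psi_+'(x)$ at $x=+\infty$, since (\ref{asympt}) is written for $f$ itself and not its derivative. This is a technicality rather than a real obstacle: by differentiating the Liouville--Green expansion term by term---legitimate within the Olver framework already invoked in Section~\ref{ms}---one obtains $\psi_+'(x) = -\alpha(\alpha x)^{3/4}\e^{-(2/5)(\alpha x)^{5/2}}(1+O(x^{-1/2}))$, so that $\overline{\psi_+(x)}\psi_+'(x)$ is controlled by $\e^{-(4/5)\cos(\pi/4)\, x^{5/2}}$ and vanishes super-exponentially. Equivalently, one may bootstrap from $\psi_+'' = (\im x^3-\lambda)\psi_+$ to transfer the decay of $\psi_+$ to that of $\psi_+'$ without appealing to differentiated asymptotics.
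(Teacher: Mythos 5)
Your proof is correct and is essentially the paper's own argument: the paper phrases it as the quadratic-form identity $\lambda(\psi_+,\psi_+)=(\psi_+,p^2\psi_+)+\im(\psi_+,x^3\psi_+)$ for the Dirichlet operator $H_+$ on $L^2(\mathbb{R}_+)$, which is exactly your integration-by-parts identity. You simply spell out more explicitly why the boundary terms at $+\infty$ vanish and why the three integrals are strictly positive, details the paper leaves implicit in the statement that $\psi_+$ is an eigenfunction of $H_+$.
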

\begin{proof}
Let $H_+=p^2+\im x^3$ on $L^2(\mathbb{R_+})$ with $\mathbb{R}_+=(0,\infty)$ be defined like (\ref{hamiltonian}), but with $p^2$ having Dirichlet boundary conditions at $x=0$. The assumption implies that $\psi_+(x,\lambda)$ is an eigenfunction of $H_+$. Thus 
\begin{equation}
\lambda(\psi_+,\psi_+)=(\psi_+,H\psi_+)=(\psi_+,p^2\psi_+)+\im (\psi_+,x^3\psi_+)\label{psi_skpd}
\end{equation}
has positive real and imaginary parts.
\end{proof} 
As a result the function 
\begin{equation}
\label{h_lambda}
h(\lambda)=\frac{\psi_+(0,\lambda)}{\overline{\psi_+(0,\bar{\lambda})}}
=\frac{\psi_+(0,\lambda)}{\psi_-(0,\lambda)}
\end{equation}
is well-defined and analytic in $\Im\lambda>0$, and continuous up to the boundary $\Im \lambda=0$, where $|h(\lambda)|=1$. For upcoming use we retain the further expression 
\begin{equation}
\label{h_lambda_bis}
h(\lambda)=\frac{f(0,-\alpha^{-2}\lambda)}{f(0,-\alpha^{2}\lambda)}.
\end{equation}
\begin{proposition}
\label{Prop_5}
\begin{equation}
\label{h_smaller1}
|h(\lambda)|<1
\end{equation}
for all $\lambda$ with $\Im\lambda>0$.
\end{proposition}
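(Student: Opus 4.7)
The plan is to apply the Phragmén-Lindelöf principle on the upper half-plane, as announced in Section~\ref{ms}, and then to strengthen the resulting weak bound $|h|\le 1$ to the strict inequality by the maximum modulus principle. First I would note that $h$ is actually analytic on $\{\Im\lambda>0\}$: by (\ref{bar_psi_+psi_-}) the denominator equals $\overline{\psi_+(0,\bar\lambda)}$, whose vanishing in the upper half-plane would force $\psi_+(0,\cdot)$ to vanish at $\bar\lambda$ with $\Im\bar\lambda<0$, contradicting the previous lemma. Together with the continuity up to $\Im\lambda=0$ and $|h|=1$ there, this is the standard setup for Phragmén-Lindelöf.

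The decisive step is a growth bound of the form $|h(\lambda)|\le C\e^{C'|\lambda|^{5/6}}$ on the closed upper half-plane. The numerator is controlled directly by Proposition~\ref{Properties_of_sol_f}(v). For the denominator I would use that $\arg(-\alpha^2\lambda)=\arg\lambda-4\pi/5$ lies in $[-4\pi/5,\pi/5]$ for $\lambda$ in the closed upper half-plane, hence at distance at least $\pi/5$ from the branch cut $\pm\pi$. The sharp asymptotic (\ref{asymp_f(0,lambda)}) thus applies uniformly and yields a lower bound of the form $|f(0,-\alpha^2\lambda)|\ge c|\lambda|^{-1/4}\e^{-K|\lambda|^{5/6}}$ for $|\lambda|$ large, while on the complementary compact subset the denominator is non-vanishing and bounded below by continuity. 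Combining the two estimates, and using $5/6<1$, Phragmén-Lindelöf applied to $\{\Im\lambda>0\}$ with the boundary value $|h|=1$ delivers $|h(\lambda)|\le 1$ throughout the upper half-plane.

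To upgrade to the strict inequality, I would note that (\ref{asymp_f(0,lambda)}) applies in both numerator and denominator whenever $\arg\lambda$ is bounded away from $\pi/5$, the only ray on which $-\alpha^{-2}\lambda$ meets the branch cut. A direct computation along, say, the positive imaginary axis $\lambda=\im r$ produces the leading behavior $\log|h(\im r)|\sim -K\sin(7\pi/12)\,r^{5/6}\to-\infty$ as $r\to\infty$. Thus $h$ is not constant on the upper half-plane, and the maximum modulus principle forbids $|h(\lambda)|=1$ at any interior point, yielding (\ref{h_smaller1}).

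The hard part is the growth bound in step two --- specifically the lower bound on $|f(0,-\alpha^2\lambda)|$. It rests on the fortunate fact that $-\alpha^2\lambda$ stays safely inside the sector of validity of (\ref{asymp_f(0,lambda)}) as $\lambda$ ranges over the closed upper half-plane, which is what lets us employ the sharp asymptotic (iv) rather than only the exponential upper bound (v).
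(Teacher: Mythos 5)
Your proposal is correct and takes essentially the same approach as the paper's proof: bound $|h|$ on the upper half-plane by $C\e^{C'|\lambda|^{5/6}}$ using Prop.~\ref{Properties_of_sol_f}(v) for the numerator and the asymptotics (iv) for the denominator (noting $-\alpha^2\lambda$ stays in the sector $|\arg(\cdot)|\le 4\pi/5$), then apply Phragm\'en--Lindel\"of with $b=5/6<1=a$ and the boundary value $|h|=1$. The only cosmetic differences are that the paper verifies non-constancy of $h$ via (\ref{asy_h}) (along large positive real $\lambda$) while you compute along $\lambda=\im r$, and that the paper packages the maximum-modulus upgrade to strict inequality inside its statement of Theorem~\ref{THM_10} rather than performing it separately as you do.
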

\begin{proof}[Proof of Theorem~\ref{main_thm}, part (d).] To be shown is $\Re\lambda>0$, $\Im\lambda=0$ for any eigenvalue $\lambda$ of $H$ with eigenfunction $\psi_+ \in L^2(\mathbb{R})$. The first conclusion is as in (\ref{psi_skpd}), since $(\psi_+,p^2\psi_+)>0$ still holds on $L^2(\mathbb{R})$. As for the second, $\bar{\lambda}$ is an eigenvalue too by $PT$-symmetry (\ref{bar_psi_+psi_-}). It thus suffices to rule out $\Im\lambda>0$. Since $h(\lambda)=-\omega^{-1}$ by Lemma~\ref{Prop_psi_pm}, and hence $|h(\lambda)|=1$, we get a contradiction to (\ref{h_smaller1}). 
\end{proof}
The proof of Prop.~\ref{Prop_5} rests on the Phragm\'en-Lindel\"of Theorem, which we state for convenience, see e.g. (\cite{conway}, Cor.~VI.4.2).
\begin{theorem}
\label{THM_10}
Let $G$ be an open sector of angle ${\pi}/{a}, \; (a>{1}/{2})$ with apex 0. Let $F$ be analytic in $G$ and continuous up to the boundary. If 
\begin{equation}
\label{ineq_F<MboundaryG}
|F(z)|\leq M, \qquad (z\in \partial G),
\end{equation}
\begin{equation}
\label{ineq_F_C1C2}
|F(z)|\leq C_1\e^{C_2|z|^b}, \qquad (z\in G)
\end{equation}
for some $M,C_1,C_2>0$ and some $b<a$, then 
\begin{equation}
\label{ineqF_M}
|F(z)|< M,\qquad (z\in G)
\end{equation}
unless $F$ is constant. 
\end{theorem}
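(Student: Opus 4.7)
The plan is the classical proof of Phragm\'en-Lindel\"of: multiply $F$ by a small exponential that dampens its growth in $G$, apply the maximum modulus principle, and then let the exponential parameter tend to zero. After a rotation I may assume $G$ is the symmetric sector $\lbrace z : |\arg z| < \pi/(2a)\rbrace$. The decisive choice is an exponent $c$ with $b < c < a$, which is available precisely because $b<a$ is strict.

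For each $\epsilon > 0$ I introduce the auxiliary function $g_\epsilon(z) = F(z)\,\e^{-\epsilon z^c}$ with $z^c$ taken on the principal branch. Because $|c\arg z|\le c\pi/(2a)<\pi/2$ on $\overline G$, one has $\Re z^c\ge \kappa |z|^c$ with $\kappa = \cos(c\pi/(2a))>0$, and hence $|\e^{-\epsilon z^c}|\le 1$ on $\overline G$. Combined with (\ref{ineq_F<MboundaryG}), this gives $|g_\epsilon|\le M$ on $\partial G$. Combined with (\ref{ineq_F_C1C2}), it gives $|g_\epsilon(z)|\le C_1\,\e^{C_2|z|^b-\epsilon\kappa|z|^c}$ inside $G$; since $c>b$, the exponent tends to $-\infty$ as $|z|\to\infty$ in $G$, so $g_\epsilon$ is bounded on $\overline G$ and in fact vanishes at infinity.

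Applying maximum modulus to $g_\epsilon$ on the truncated region $G\cap\lbrace|z|<R\rbrace$ and letting $R\to\infty$ then yields $|g_\epsilon|\le M$ throughout $G$. Sending $\epsilon\downarrow 0$, $g_\epsilon(z)\to F(z)$ pointwise, so $|F(z)|\le M$ on $G$. The strict inequality $|F|<M$ under the non-constancy hypothesis is then the statement of the strong maximum modulus principle applied to $F$ itself: attainment of equality at an interior point would force $F$ to be constant. I expect the main hurdle to be the calibration of the auxiliary factor: the choice of $c\in(b,a)$ together with the principal branch of $z^c$ is what simultaneously preserves the boundary bound ($\Re z^c\ge 0$ on $\overline G$) and enforces exponential decay in the interior ($\Re z^c\ge\kappa|z|^c$), and this is exactly the place where the strict hypothesis $b<a$ is used.
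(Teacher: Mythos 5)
Your proof is correct, and it is worth noting that the paper itself does not actually prove this theorem: it states it for convenience and cites Conway (Cor.~VI.4.2) for the version with the weak inequality $\leq$, adding only the one-line remark that the strict inequality for non-constant $F$ then follows from the (strong) Maximum Modulus principle, Conway Thm.~IV.3.11. Your proposal therefore does more work than the paper demands, but it does it along the classical lines that underlie the cited result: the auxiliary factor $\e^{-\epsilon z^{c}}$ with $c\in(b,a)$, the maximum modulus principle on truncated sectors $G\cap\{|z|<R\}$ (using that $g_\epsilon$ is continuous on the closure and that the arc contribution is $\le M$ once $R$ is large), the limit $\epsilon\downarrow 0$, and finally the passage from $\leq$ to $<$ by the strong maximum modulus principle exactly as the paper indicates. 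The reasoning is sound; the calibration $\cos(c\pi/(2a))>0$ you emphasize is indeed the point where $b<a$ is used, and the strict-inequality step is precisely the refinement the paper flags. The only small thing I would tighten is the choice of $c$: you want $c\in(\max(b,0),a)$ rather than merely $c\in(b,a)$, so that $z^{c}$ is well behaved (in particular continuous at the apex $0$) when the hypothesis happens to hold with $b\le 0$; since $a>1/2>0$ this interval is always nonempty, so nothing in your argument changes.
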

We add that the inequality (\ref{ineqF_M}) is most often stated in the wide sense $(\leq)$, but without exceptions. From that statement the one given above follows by the Maximum Modulus principle, see (\cite{conway}, Thm.~IV.3.11).
\begin{proof} [Proof of Prop.~\ref{Prop_5}.] We first note that by $|\e^z|\geq \e^{-|z|}$ Eq.~(\ref{asymp_f(0,lambda)}) implies a lower bound 
\begin{equation}
\label{lower_bondf}
|f(0,\lambda)|\geq c\, \e^{-\tilde{K}|\lambda|^{{5}/{6}}}, \qquad (|\arg \lambda|\leq \pi -\delta), 
\end{equation}
for any $\tilde{K}>K$, some small $c>0$ and all large $|\lambda|$. Next we consider Eq.~(\ref{h_lambda_bis}) and observe that for $\Im \lambda>0$ the point $-\alpha^2\lambda$ appearing in the denominator lies in the sector just mentioned provided $\delta\leq \pi/5$. Combining the lower bound (\ref{lower_bondf}) with the upper bound (\ref{upper_bondf}) we so obtain
$$|h(\lambda)|\leq C'\e^{2\tilde{K}|\lambda|^{{5}/{6}}}, \qquad (\Im \lambda>0)$$
first for large $|\lambda|$ and then, by adjusting $C'$, without any restrictions other than $\Im\lambda>0$. We apply Theorem~\ref{THM_10} (with $\lambda$ instead of $z$) to $h$ on that sector, i.e., with $M=1$, $a=1$, $b={5}/{6}$. Since $b<a$ the conclusion follows by (\ref{ineqF_M}); in fact $h$ can not be a constant, e.g. because of
its asymptotics (\ref{asy_h}). 
\end{proof}

\section{Details of the proof}
We supply the proofs not yet given in the same order as that of the statements they refer to. 
\begin{proof}[Proof of Theorem~\ref{main_thm}, part (a).] We prove below that for $\psi\in \mathcal{D}(p^2)\cap \mathcal{D}(x^3)$
\begin{equation}
\label{norm_ineq}
||(p^2+\im x^3)\psi||^2\geq \frac{1}{2} ||p^2\psi||^2+\frac{1}{2}||x^3\psi||^2-C||\psi||^2,
\end{equation}
where $C>0$. That estimate implies that $H$ is closed. In fact, if $\psi_n\to\psi, \; (\psi_n\in \mathcal{D}(H))$, and $H\psi_n=(p^2+\im x^3)\psi_n$ is Cauchy, then so are $p^2\psi_n$ and $x^3\psi_n$. Since $p^2$ and $x^3$ are closed, $\psi\in\mathcal{D}(p^2)\cap \mathcal{D}(x^3)=\mathcal{D}(H)$. 

The inclusion $C_0^\infty(\mathbb{R})\subset\mathcal{D}(p^2)\cap
\mathcal{D}(x^3)$ is dense.
It thus suffices to prove (\ref{norm_ineq}) for $\psi\in C_0^\infty(\mathbb{R})$. There the following computations are admissible:
\begin{gather*}
||(p^2+\im x^3)\psi||^2= ||p^2\psi||^2+||x^3\psi||^2+(\psi,i[p^2,x^3]\psi),\\
\im [p^2,x^3]=2(px^2+x^2p),\\ 
|(\psi,px^2\psi)|\leq \frac{1}{2} ||p\psi||^2+\frac{1}{2}||x^2\psi||^2,\\
p^2\leq \frac{1}{2}(\varepsilon p^4+\varepsilon^{-1}),\qquad
x^4\leq \frac{1}{3}(2\varepsilon x^6+\varepsilon^{-1}),
\end{gather*}
and the conclusion is by taking $\varepsilon>0$ small enough. \end{proof}
\begin{lemma}
\label{resolvent} The open left complex half--plane is contained in the resolvent set of $H$, i.e., 
$\{z\in\mathbb{C} \mid \Re z<0\}\subset \rho(H)$.
\end{lemma}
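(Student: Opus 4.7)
The plan is to combine the accretivity of $H$ with its closedness (established in part (a)) and then obtain surjectivity via an analysis of the adjoint equation.

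For any $\psi\in\mathcal{D}(H)=\mathcal{D}(p^2)\cap\mathcal{D}(x^3)$, since $p$ is self-adjoint one has $\langle\psi,p^2\psi\rangle=\|p\psi\|^2\ge 0$, while $\langle\psi,x^3\psi\rangle=\int_{\mathbb{R}}x^3|\psi|^2\,dx\in\mathbb{R}$ (the integral being absolutely convergent by Cauchy--Schwarz). Thus $\Re\langle\psi,H\psi\rangle=\|p\psi\|^2\ge 0$, and for $\Re z<0$ Cauchy--Schwarz gives
\[
\|(H-z)\psi\|\,\|\psi\|\ge\Re\langle\psi,(H-z)\psi\rangle=\|p\psi\|^2+|\Re z|\,\|\psi\|^2\ge|\Re z|\,\|\psi\|^2,
\]
hence $\|(H-z)\psi\|\ge|\Re z|\,\|\psi\|$. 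Combined with part (a), this makes $H-z$ injective, its range closed, and its inverse there bounded by $1/|\Re z|$.

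Surjectivity reduces to showing $\mathrm{Ran}(H-z)^{\perp}=\ker(H^{*}-\bar z)=\{0\}$. Any $\phi$ in that kernel is an $L^2$ distributional solution of $-\phi''-\mathrm{i}x^3\phi=\bar z\phi$, hence smooth by elliptic regularity for this ODE with smooth coefficients. I would then show $\phi\in\mathcal{D}(p^2)\cap\mathcal{D}(x^3)$; the twin accretivity identity $\Re\langle\phi,(p^2-\mathrm{i}x^3)\phi\rangle=\|p\phi\|^2$ then gives, through the same Cauchy--Schwarz step, $\|(\tilde H-\bar z)\phi\|\ge|\Re z|\,\|\phi\|$ for $\tilde H:=p^2-\mathrm{i}x^3$ on $\mathcal{D}(p^2)\cap\mathcal{D}(x^3)$ (closed by the same argument as part (a)), forcing $\phi=0$.

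The main obstacle is precisely this promotion of the $L^2$ ODE solution $\phi$ to $x^3\phi,\,p^2\phi\in L^2$ separately---accretivity controls only the full quadratic form, whereas the adjoint domain demands each term individually. One route, abstract, identifies $\mathcal{D}(H^{*})$ with $\mathcal{D}(p^2)\cap\mathcal{D}(x^3)$ by running the estimate (\ref{norm_ineq}) on $\tilde H$. A more concrete route appeals to the Liouville--Green machinery underlying Proposition~\ref{Properties_of_sol_f}: the $PT$-conjugate ODE has the same Stokes geometry, so any $L^2$ solution decays super-exponentially at $\pm\infty$, automatically placing $x^3\phi$ and $p^2\phi$ in $L^2$. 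Either way, once $\phi\in\mathcal{D}(\tilde H)$ the lemma follows.
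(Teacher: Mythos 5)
Your strategy is genuinely different from the paper's. Both begin with accretivity, which yields $\|(H-z)\psi\|\ge|\Re z|\,\|\psi\|$, hence injectivity and (using closedness) closed range; what remains is density of $\mathrm{Ran}(H-z)$. The paper handles this with a cutoff argument: it introduces $H_n=p^2+\mathrm{i}F_nx^3$ with $F_n(x)=F(x/n)$, observes that $z\in\rho(H_n)$ for $\Re z<0$ (the truncated potential being bounded), and shows $(H-z)G_n(H_n-z)^{-1}\psi\to\psi$ via the commutator $[p^2,G_n]=-2\mathrm{i}G_n'p-G_n''$ and the uniform bound $\|(H_n-z)^{-1}\|\le(-\Re z)^{-1}$. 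You instead reduce to $\ker(H^{*}-\bar z)=\{0\}$. Of the two routes you sketch for the crucial promotion step, the \emph{abstract} one is circular: from $\tilde H\subseteq H^{*}$ and closedness of $\tilde H$ alone you cannot infer $\mathcal{D}(H^{*})=\mathcal{D}(\tilde H)$, since that identity is equivalent to the surjectivity of $\tilde H-z$, i.e.\ to the very lemma (for $\tilde H$) you are in the middle of proving for $H$. The \emph{concrete} route is sound: an $L^2$ distributional solution $\phi$ of $-\phi''-\mathrm{i}x^3\phi=\bar z\phi$ is classical by elliptic regularity, $\bar\phi$ solves the original ODE (\ref{DE}) with spectral parameter $z$, and the Liouville--Green asymptotics (Proposition~\ref{Properties_of_sol_f} via $\psi(x)=f(\pm\alpha^{\pm1}x,\cdot)$, or a direct rerun of Theorem~\ref{THM_6}) force any $L^2$ solution at $+\infty$ (respectively $-\infty$) to be a multiple of the super-exponentially decaying branch; with that decay, $x^3\phi,\phi''\in L^2$, integration by parts is legitimate, and the accretivity identity $\Re\langle\phi,(\tilde H-\bar z)\phi\rangle=\|p\phi\|^2+|\Re z|\|\phi\|^2$ forces $\phi=0$. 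Two remarks: you should make explicit the dichotomy between the decaying and growing Liouville--Green solutions at each end (implicit in (\ref{asympt}) but not spelled out), and you should be aware that your route couples Lemma~\ref{resolvent} to the ODE machinery of Proposition~\ref{Properties_of_sol_f}, which the paper deliberately keeps independent (the paper proves the lemma by a self-contained, purely operator-theoretic cutoff argument); the reordering is harmless logically, since Proposition~\ref{Properties_of_sol_f} does not rely on Lemma~\ref{resolvent}, but it makes the architecture heavier.
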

The lemma, which is contained in \cite{kato2} as a special case, can phrased by saying that $H$ is maximally accretive. The following proof is adapted from \cite{hu}.
\begin{proof}By $\Re(\psi, H\psi)\ge 0$ we have
\begin{equation}
\label{nr}
\|(H-z)\psi\|\ge (-\Re z)\|\psi\|.
\end{equation}
It suffices to show that $(H-z)\mathcal{D}(H)$ is dense in $L^2(\mathbb{R})$ for some $\Re z<0$. In fact, Eq.~(\ref{nr}) then implies $z\in\rho(H)$ with 
$\|(H-z)^{-1}\|\le (-\Re z)^{-1}$ 
and further extends that conclusion to all $\Re z<0$.

Let $F, G\in C_0^\infty(\mathbb R)$ be real with $FG=G$ and $G(0)=1$, and set $F_n(x)=F(x/n)$ ($n=1, 2\ldots$) and likewise for $G_n$. We consider the truncation $H_n=p^2+\im F_nx^3$ in place of $H$. Then (\ref{nr}) holds true since the potential $\im F_nx^3$ is still imaginary; moreover $z\in\rho(H_n)$ for $\Re z<-M_n$ (and hence $\Re z<0$), since it now has a bound $M_n<+\infty$. We then maintain that$$(H-z)G_n(H_n-z)^{-1}\psi \to \psi,\qquad(n\to\infty)$$
for any $\psi\in L^2(\mathbb{R})$, which implies the density statement for $H$. Indeed, the l.h.s. equals 
$$(G_n(H_n-z)+[p^2, G_n])(H_n-z)^{-1}\psi=G_n\psi-(2\im G'_n(x)p+G''_n(x))(H_n-z)^{-1}\psi,$$
where $\|p\varphi\|^2=\Re(\varphi,p^2\varphi)\le \|\varphi\|\|H_n\varphi\|$ and $\|(H_n-z)^{-1}\|\le (-\Re z)^{-1}$. The limit follows by $G_n\psi\to\psi$ and $\|G'_n\|_\infty,\|G''_n\|_\infty\to 0$. \end{proof}

\begin{proof}[Proof of Theorem~\ref{main_thm}, part (b).] We will show that $(H-z)^{-1}$ is compact for some $z\in\rho(H)$, meaning that it maps any bounded subset of $L^2(\mathbb{R})$ into a compact one. By (\cite{kato}, Thm.~III.6.29) $H$ then has just discrete spectrum. To that end we take $z=-a$ with large $a>0$, whence $-a\in\rho(H)$ by Lemma~\ref{resolvent} and 
\begin{equation*}
||(H+a)\psi||^2\geq \frac{1}{2} ||p^2\psi||^2+\frac{1}{2}||x^3\psi||^2+\frac{1}{2}||\psi||^2.
\end{equation*}
Indeed the l.h.s. equals $||H\psi||^2+2a||p\psi||^2+a^2||\psi||^2$,  where the first term is estimated from below by (\ref{norm_ineq}). At this point the required property follows by Rellich's criterion (\cite{RS4}, Thm.~XIII.65), which we recall: Let $F, G$ be functions on $\mathbb{R}$ diverging at $\infty$. Then the subset consisting of those $\psi\in L^2(\mathbb{R})$ with $||\psi||^2$, $||F(x)\psi||^2$, $||G(p)\psi||^2$ all $\le 1$ is compact. It is applied here with $F(x)=x^3$, $G(p)=p^2$.\end{proof}

\begin{proof}[Proof of Theorem~\ref{main_thm}, part (c).] 
The eigenvalues are geometrically simple, as remarked in Section~\ref{ms}. Let $\lambda_0$ be an eigenvalue of $H$, its eigenfunction being $\psi_-(x,\lambda_0)=-\omega \psi_+(x,\lambda_0)$. Its algebraic multiplicity $m$ is finite by part (b). We need to rule out $m>1$.

We set $\cdot=\partial/\partial\lambda$, differentiate (\ref{sc}) at $\lambda=\lambda_0$, obtain $\dot f_-(z,\lambda_0)+\omega \dot f_+(z,\lambda_0)=\dot C(\lambda_0)f(z,\lambda_0)$ because of Lemma~\ref{Prop_psi_pm}(ii), and hence
\begin{equation}\label{c1}
\dot \psi_-(z,\lambda_0)+\omega \dot \psi_+(z,\lambda_0)=\dot C(\lambda_0)f(\im z,\lambda_0).
\end{equation}
As mentioned earlier, $f(\im x,\lambda)$ is divergent at $x\to\pm\infty$.

We next claim and prove that $m>1$ implies $\dot C(\lambda_0)=0$.
There would exists a (Jordan) generalized eigenvector $\varphi\in \mathcal{D}(H)$ such that $(H-\lambda_0)\varphi=\psi_-$, i.e., 
\begin{equation*}
-\varphi''(x)+(\im x^3-\lambda_0)\varphi(x)=\psi_-(x,\lambda_0).
\end{equation*}
On the other hand, differentiating the ODE (\ref{DE}) with the solution $\psi_\pm(x,\lambda)$ inserted yields
\begin{equation*}
-\dot\psi_\pm''(x,\lambda)+(\im x^3-\lambda)\dot\psi_\pm(x,\lambda)=\psi_\pm(x,\lambda),
\end{equation*}
where $\dot\psi_\pm(x,\lambda)$ is $L^2$ near the same one end $x\to \pm\infty$ as $\psi_\pm(x,\lambda)$, cf.~(\ref{Psi_pm}). That decay indeed follows by the Cauchy estimate (\cite{conway}, IV.2.14) from the uniformity of the bound (\ref{asympt}) locally in $\lambda$. By difference, $\dot\psi_-(\cdot,\lambda_0)-\varphi$ is a solution of the ODE that is $L^2$ near $-\infty$ and hence a multiple of $\psi_-(\cdot,\lambda_0)=-\omega \psi_+(\cdot,\lambda_0)$. Thus $\dot\psi_-(x,\lambda_0)$ is $L^2$ at $x\to +\infty$ as well, and so is the l.h.s. of (\ref{c1}) with $z=x$. We conclude that $\dot C(\lambda_0)=0$.

Differentiating (\ref{h_lambda}) yields
\begin{equation*}
\dot h(\lambda)=
\psi_-(\lambda)^{-1}\Bigl(\dot\psi_+(\lambda)-\frac{\psi_+(\lambda)}{\psi_-(\lambda)}\dot\psi_-(\lambda)\Bigr)=
-\frac{h(\lambda)}{\psi_-(\lambda)}\bigl(\dot\psi_-(\lambda)-h(\lambda)^{-1}\dot\psi_+(\lambda)\bigr)
\end{equation*}
with $\psi_\pm(\lambda)=\psi_\pm(0,\lambda)$, and in particular $\dot h(\lambda_0)=0$ from $h(\lambda_0)^{-1}=-\omega$ and (\ref{c1}). Thus 
\begin{equation*}
-\omega h(\lambda)=1+\beta(\lambda-\lambda_0)^n+O((\lambda-\lambda_0)^{n+1}),\qquad (\lambda\to\lambda_0)
\end{equation*}
for some $n\ge 2$ and $\beta\neq 0$. That limit can be attained within the half--plane $\Im \lambda>0$ in such a way that $\Re\beta(\lambda-\lambda_0)^n>0$. We end up with a contradiction to (\ref{h_smaller1}).
\end{proof}

\begin{proof}[Proof of Theorem~\ref{main_thm}, part (e).] 
For the first part, it suffices to show that the equation $h(\lambda)=-\omega^{-1}$ has infinitely many solutions within the region indicated in Lemma~\ref{Prop_psi_pm}~(iii). In view of part (d) we will look for large $\lambda>0$ as candidates. The points $-\alpha^{\pm 2}\lambda$ seen in Eq.~(\ref{h_lambda_bis}) then lie in the sector where the asymptotics (\ref{asymp_f(0,lambda)}) applies, provided $\delta<\pi/5$. The result is
\begin{equation}\label{asy_h}
h(\lambda)=\e^{-2\pi\im/5}\e^{2\im K\lambda^{5/6}\sin{(2\pi/3)}}(1+O(\lambda^{-5/6}))
\end{equation}
for $|\arg\lambda|<\pi/5-\delta$, besides of $|h(\lambda)|=1$ for real $\lambda$. Hence infinitely many solutions exist. To locate them asymptotically, let $h_0(\lambda)$, ($\lambda\in\mathbb{C}\setminus(-\infty,0]$) be defined by the last expression with $O(\lambda^{-5/6})$ dropped, whence
\begin{equation}\label{Delta}
|(h(\lambda)+\omega^{-1})-(h_0(\lambda)+\omega^{-1})|\le C|h_0(\lambda)||\lambda|^{-5/6}
\end{equation} 
with $C>0$. For comparison, the equation $h_0(\lambda)=-\omega^{-1}$
has the solutions (\ref{l0n}) with $n=0,1, \ldots$. 

Let $1/6<\alpha<5/6$. The set $\{\lambda\mid|h_0(\lambda)+\omega^{-1}|<|\lambda|^{-\alpha}\}$ consists of subsets $G_n\ni\lambda^0_n$ of diameter $O(|\lambda^0_n|^{1/6-\alpha})=o(1)$. 
For $\lambda\in\partial G_n$ with $n$ large we have $|h_0(\lambda)|\le 2$ and (\ref{Delta}) is further bounded by
\begin{equation*}
2C|\lambda|^{-5/6}<|\lambda|^{-\alpha}=|h_0(\lambda)+\omega^{-1}|.
\end{equation*}
Hence, by Rouch\'e's Theorem (\cite{conway}, Thm.~V.3.8), also the equation $h(\lambda)=-\omega^{-1}$ has precisely one solution in each $G_n$ for large $n$. Moreover there are no further solutions $\lambda$. In fact, for large, real $\lambda\notin\cup_{n=0}^\infty G_n$
\begin{equation*}
|h(\lambda)+\omega^{-1}|\ge |h_0(\lambda)+\omega^{-1}|-C|\lambda|^{-5/6}\ge |\lambda|^{-\alpha}-C|\lambda|^{-5/6}>0.
\end{equation*}
\end{proof}

The proof of properties (i-iv) of Prop.~\ref{Properties_of_sol_f} will be a straightforward application of the Liouville-Green approximation as described in Thm.~6.11.1 of \cite{olver}. We reproduce the result for ease of reference.
\begin{theorem}
\label{THM_6} Let $D\subset D_0\subset\mathbb{C}$ be simply connected domains.
Consider the differential equation 
\begin{equation}
\label{f_q}
f''=(q(z)+\tilde{q}(z))f, \qquad ('=\frac{d}{dz})
\end{equation} 
in $z\in D_0$, where $q,\tilde{q}$ are analytic, and suppose that $q$ does not vanish in $D$. Let 
$$\xi'(z)=q(z)^{1/2},$$
and $H(a)\subset D$ be a subset such that for every $z\in H(a)$ there is a finite chain $\mathcal{P}_{a,z}$ in $D$ of regular $C_2-$arcs from $a$ to $z$ along which $\Re\xi(.)$ is non-increasing ($\xi-$progressive path).

Then (\ref{f_q}) admits a solution $f(z)$ analytic in $D_0$ which for $z\in H(a)$ is of the form 
\begin{equation}
\label{f_in_D}
f(z)=q(z)^{-1/4}\e^{-\xi(z)}(1+\varepsilon(z)),
\end{equation}
where $\varepsilon(z)$ satisfies
\begin{equation}
\label{error_control}
|\varepsilon(z)|\leq \e^{V_{a,z}(F)}-1;
\end{equation}
here $F$ is the error--control function with derivative 
$$F'=q^{-1/4}(q^{-1/4})''-\tilde{q}q^{-1/2}$$
and $V_{a,z}(F)$ is the variation of $F$ along $\mathcal{P}_{a,z}$, i.e., 
\begin{equation}
\label{V_a_z}
 V_{a,z}(F)=\int_0^1|F'(z(t))||z'(t)|dt,
\end{equation}
where $[0,1]\ni t\mapsto z(t)$ is a parametrization of $\mathcal{P}_{a,z}$.

The point $a$ may be at infinity, provided the paths $\mathcal{P}_{a,z}$, $(z\in H(a))$ coincide near $a=\infty$ and $V_{a,z}(F)<\infty$.
\end{theorem}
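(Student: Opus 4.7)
The plan is to follow the standard rigorisation of the WKB/Liouville--Green ansatz: transform the ODE so that the desired approximation becomes the trivial solution, recast the problem as a Volterra integral equation, and solve by Picard iteration with error bounds controlled by the variation of $F$ along the $\xi$-progressive path.

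First I would perform the substitution $f(z)=q(z)^{-1/4}\e^{-\xi(z)}w(z)$ and use $\xi'=q^{1/2}$. A direct computation transforms (\ref{f_q}) into a linear equation for $w$ of the shape
\begin{equation*}
w''-2q^{1/2}w'=\bigl(q^{1/4}(q^{-1/4})''-\tilde q\bigr)w,
\end{equation*}
whose right-hand side is exactly $F'\cdot q^{1/2}\cdot w$ up to algebraic rearrangement, so that the desired asymptotics corresponds to $w\equiv 1$, i.e.\ $\varepsilon:=w-1\equiv 0$, whenever $F\equiv 0$.

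Next I would convert this equation into an integral equation by variation of parameters. The homogeneous equation $w''-2\xi'w'=0$ has the two independent solutions $w_1=1$ and $w_2$ with $w_2'=\e^{2\xi}$, and Wronskian $\e^{2\xi}$. Integrating along the path $\mathcal{P}_{a,z}$ one obtains a Volterra equation of the form
\begin{equation*}
\varepsilon(z)=\int_{\mathcal{P}_{a,z}}K(z,t)\bigl(1+\varepsilon(t)\bigr)F'(t)\,dt,\qquad K(z,t)=\tfrac{1}{2}\bigl(1-\e^{-2(\xi(t)-\xi(z))}\bigr).
\end{equation*}
The crucial point is that along a $\xi$-progressive path $\Re\xi$ is non-increasing from $a$ to $z$, so $\Re(\xi(t)-\xi(z))\ge 0$ for all intermediate $t$, which gives $|\e^{-2(\xi(t)-\xi(z))}|\le 1$ and hence $|K(z,t)|\le 1$. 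This is precisely why the hypothesis on the path is needed.

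With this pointwise bound on the kernel in hand, I would solve the integral equation by Picard iteration $\varepsilon_0=0$, $\varepsilon_{n+1}=T\varepsilon_n+T\mathbf{1}$, and prove by induction along $\mathcal{P}_{a,z}$ that the $n$-th term is bounded by $V_{a,z}(F)^n/n!$, using (\ref{V_a_z}) and the telescoping property of integrating a monotone non-decreasing function of the parameter $t$. Summing the Neumann series yields $|\varepsilon(z)|\le \e^{V_{a,z}(F)}-1$ as in (\ref{error_control}), and analyticity of $f$ in $D_0$ follows from uniform convergence of the iterates on compact subsets, together with a standard extension from $D$ to $D_0$ using that $q,\tilde q$ are analytic on $D_0$ and that once one solution is known, analytic continuation to the larger domain is automatic.

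Finally, the case $a=\infty$ is handled by first applying the above to finite base points $a_N\to\infty$ along the common initial segment of the paths: since $V_{a,z}(F)<\infty$, the integral operators converge in norm and the solutions converge to a limit that still satisfies (\ref{f_in_D}). The main obstacle I anticipate is the first step, namely arranging the bookkeeping so that the kernel of the Volterra equation comes out in the clean form above with the factor $F'(t)\,dt$ appearing explicitly; once this is done, the iteration bound and the extension to $a=\infty$ are essentially routine.
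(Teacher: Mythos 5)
The paper does not prove Theorem~\ref{THM_6} at all: it is quoted verbatim from Olver's book (Thm.~6.11.1 of \cite{olver}, supplemented by Thm.~5.3.1 there for the extension from $D$ to $D_0$), precisely so that the Liouville--Green estimates needed for Prop.~\ref{Properties_of_sol_f} can be invoked as a black box. There is therefore no ``paper's proof'' to match yours against; what you are reconstructing is Olver's own argument, and you reconstruct it correctly in outline. The transformation $f=q^{-1/4}\e^{-\xi}w$, the Volterra integral equation for $\varepsilon=w-1$, the pointwise bound $|K(z,t)|\le 1$ along a $\xi$-progressive path, the Picard iteration with $n$-th term bounded by $V_{a,z}(F)^n/n!$ (using that $V_{a,\cdot}(F)$ is non-decreasing along the path), the resulting bound $|\varepsilon|\le\e^{V}-1$, analyticity of the fixed point, the continuation to $D_0$, and the limiting argument for $a=\infty$ are exactly the ingredients Olver uses.

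One bookkeeping slip in your sketch is worth flagging, because it is the kind of thing that trips people up in this computation. The ODE for $w$ in the $z$-variable is not $w''-2q^{1/2}w'=(q^{1/4}(q^{-1/4})''-\tilde q)w$; a careful calculation gives
\begin{equation*}
w''+\bigl(-2q^{1/2}+2q^{1/4}(q^{-1/4})'\bigr)w'=\bigl(\tilde q-q^{1/4}(q^{-1/4})''\bigr)w,
\end{equation*}
i.e.\ there is an extra first-order term and the sign on the right is opposite to what you wrote. The reason your Volterra equation nevertheless comes out correct is that the natural independent variable here is $\xi$, not $z$: writing $\hat W(\xi)=w(z(\xi))$, the extra term is precisely what is produced by the Jacobian of $z\mapsto\xi$ and it cancels, giving the clean form
\begin{equation*}
\frac{d^2\hat W}{d\xi^2}-2\frac{d\hat W}{d\xi}=-\frac{dF}{d\xi}\,\hat W,
\end{equation*}
with homogeneous solutions $1$ and $\tfrac12\e^{2\xi}$, Wronskian $\e^{2\xi}$, and variation-of-parameters kernel $\tfrac12\bigl(1-\e^{-2(\sigma-\xi)}\bigr)$. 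Pulling back to the path in $z$ then yields exactly the Volterra equation and kernel you wrote. So the gap is not in the scheme of proof but in the intermediate formula; it would be cleaner (and is what Olver does) to pass to the $\xi$-variable immediately after the substitution, rather than keeping $z$. With that fixed, your proposal is a faithful reconstruction of the cited result.
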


In \cite{olver} the result is stated for $D_0=D$; the formulation given here follows trivially in conjunction with Thm.~5.3.1 there. It will be useful to add two remarks about Theorem~\ref{THM_6} and its proof in \cite{olver}.
\begin{remark}
\label{Rem7}
In the proof of Theorem~\ref{THM_6} the solution just described is singled out at first as the fixed point $f$ of a contraction map, see (\cite{olver}, Eq.~(11.03)), symbolically
\begin{equation*}
 f=K(f).
\end{equation*}
As such it is unique. A property of $f$ which is then implied by that first one may or may not grant uniqueness by itself; but if it does, then $f$ clearly agrees with the solution of the fixed point problem. 
\end{remark}

\begin{remark}
\label{Rem8}
Suppose that $q,\tilde{q}$ depend analytically on a further parameter $\lambda\in D'\subset \mathbb{C}$ and that $z\in H(a,\lambda)$ for some $z\in D$ and (locally) all $\lambda \in D'$. Then the (unique) solution $\fz$ determined by the fixed point problem just mentioned is analytic in $\lambda\in D'$, since the map $K=K(\lambda)$ is. 
\end{remark}
\begin{proof}[Proof of Proposition 2.] We apply Theorem~\ref{THM_6} with
$D_0=\mathbb{C}$ and $D=\mathbb{C}\setminus(-\infty,0]$ by writing Eq.~(\ref{DE_turned}) as (\ref{f_q}) with $q+\tilde{q}$ decomposed as 
\begin{equation}
\label{dec}
 q(z)=z^3, \qquad\tilde{q}(z)=\lambda,
\end{equation} 
and $\xi(z)=(2/5)z^{{5}/{2}}$. The sector $S$: $|\arg{z}|<3\pi/5$ in Eq.~(\ref{asympt}) is mapped bijectively by $z\mapsto \xi(z) $ onto the Riemann surface $\mathcal{R}: \; |\arg{\xi}|<3\pi/2$ described in terms of three sheets in Fig.~\ref{fig_Riemannsheets}.

 \begin{figure}[h]
\centering
 \begin{tikzpicture}[scale=1.2,cap=round,>=latex]

\draw[pattern=north west lines, pattern color=gray, draw=none] (0,1.2) rectangle (-1.2,-1.2);

 \draw[ color=black, thick] (0,0) --(0,-1.2);  
 \draw[loosely dashed, color=black, thick] (0,0) --(0,1.2);      
             \draw (0,-1.4) node {$\mathcal{R}_{-1}$}; 
 \draw[fill=white] (0,0) circle [radius=0.05] node[left] {$0$}; 
 \draw[->,thick, color=black] (-0.4,0.4) -- (0.4,0.4) node[right] {$\text{to }\mathcal{R}_0$};

   
\begin{scope}[shift={(3.5,0)}]

\draw[pattern=north west lines, pattern color=gray, draw=none] (0,1.2) rectangle (1.2,-1.2);
 
 \draw[loosely dashed, color=black, thick] (0,-1.2) --(0,1.2);      
             \draw (0,-1.4) node {$\mathcal{R}_{0}$}; 
 \draw[fill=white] (0,0) circle [radius=0.05] node[left] {$0$}; 
 \draw[->,thick, color=black] (0.4,0.4) -- (-0.4,0.4) node[left] {$\text{to }\mathcal{R}_{-1}$};
  \draw[->,thick, color=black] (0.4,-0.4) -- (-0.4,-0.4) node[left] {$\text{to }\mathcal{R}_{1}$};
    \end{scope}

\begin{scope}[shift={(7,0)}]
\draw[pattern=north west lines, pattern color=gray, draw=none] (0,1.2) rectangle (-1.2,-1.2);

 \draw[ color=black, thick] (0,0) --(0,1.2);  
 \draw[loosely dashed, color=black, thick] (0,0) --(0,-1.2);      
         \draw (0,-1.4) node {$\mathcal{R}_{1}$}; 
 \draw[fill=white] (0,0) circle [radius=0.05] node[left] {$0$}; 
 \draw[->,thick, color=black] (-0.4,-0.4) -- (0.4,-0.4) node[right] {$\text{to }\mathcal{R}_0$};

     \end{scope}

   \end{tikzpicture}
  \caption{Riemann surface $\mathcal{R}$ in the variable $\xi$.}

\label{fig_Riemannsheets}
\end{figure}
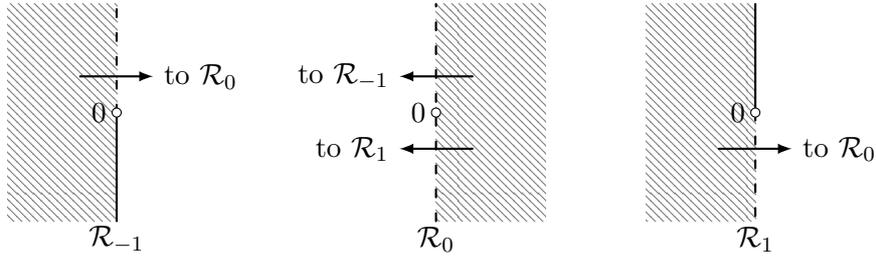

On $\mathcal{R}$ a family of paths is constructed as follows. The point at real infinity, $\xi=+\infty$, on $\mathcal{R}_0$ is joined to $\xi=R>0$ along the real axis, then continued as an arc in the positive (or negative) sense of the circle $|\xi|=R$ till $\xi=-R$ on the sheet $\mathcal{R}_1$ (or $\mathcal{R}_{-1}$), and finally as a line till $\xi=-R\pm \im\infty$, see Fig.~\ref{fig_path}.

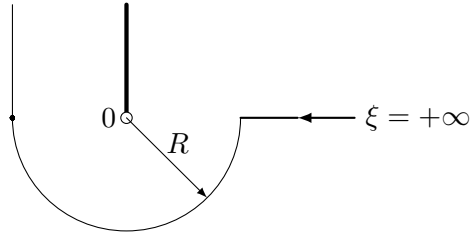
\begin{figure}[h]
\centering
 \begin{tikzpicture}[scale=1.5,cap=round,>=latex]
    \draw[ultra thick,-] (0,0) -- (0,1);
 \draw[-, color=black] (-1,0)--(-1,1) ;
  \draw[<-, color=black, thick] (1.5,0) --(2,0) node[right] {$\xi=+\infty$}; 
  \draw[-, color=black, thick] (1.5,0) --(1,0);          
 \draw [-] (1,0) arc (0:-180:1);
 \draw[fill=white] (0,0) circle [radius=0.05] node[left] {$0$};  
  
    \draw [->] (0,0) -- (-45:1)  ;
    
    \draw (-26:0.5) node {$R$};
    
     \draw plot [only marks, mark size=0.5, mark=*] (-1,0) ;

   \end{tikzpicture}
  \caption{Progressive path in the $\xi$-plane.}

\label{fig_path}
\end{figure} 

Any point $\xi \in \mathcal{R}$ is reached from $\xi=+\infty$ along a unique path, $\mathcal{P}_{+\infty,\xi}$, which is a portion of one of the above paths. Indeed, its parameter $R$ is 
\begin{equation}
\label{param_R}
R= \left\{ 
 \begin{array}{l l}
   |\xi|, & \quad (|\arg{\xi}|\leq \pi),\\
  |\Re\xi|, & \quad (\pi \leq|\arg{\xi}|\leq {3}\pi/{2}).
 \end{array} \right.
\end{equation}
That path is progressive in the sense that $\Re\xi$ is non-increasing along it. By inverting $z\mapsto \xi$ any point $z\in S$ is reached by a $\xi$-progressive path $\mathcal{P}_{+\infty,z}$. Eq.~(\ref{V_a_z}) takes the form
$$ V_{+\infty,z}(F)=\int_{\mathcal{P}_{+\infty,z}}|F'(z)||dz|=\int_{\mathcal{P}_{+\infty,\xi}}\Bigl|\frac{F'(z)}{\xi'(z)} \big|_{z=z(\xi)}\Bigr||d\xi|$$ and a straightforward computation yields 
\begin{gather*}
F'(z)=\frac{21}{16} z^{-{7}/{2}}-\lambda z^{-{3}/{2}},\\
\frac{F'(z)}{\xi'(z)} \bigg|_{z=z(\xi)}=\frac{21}{100} \xi^{-2} -\lambda \bigl(\frac{2}{5}\bigr)^{{5}/{6}}\xi^{-{6}/{5}}.
\end{gather*}
The variation along the whole path bounds that of any portion, $V_{+\infty,z}(F)$. Hence
$$V_{+\infty,z}(F)=O(R^{-1})+\lambda O(R^{-1/5}), \qquad (R \to \infty).$$

We next consider points $z$ in the sector $|\arg{z}|\leq 3\pi/5-\varepsilon, \; (\varepsilon>0)$, as in (\ref{asympt}). Then $|z|\to \infty$ implies that $R$ diverges. In fact $|\Re \xi|>|\xi|\sin{({5}\varepsilon/{2})}$ in the second case (\ref{param_R}), whence $R\gtrsim |z|^{5/2}$ and $$V_{+\infty}(F)=O(|z|^{-5/2})+\lambda O(|z|^{-1/2}), \qquad (|z|\to \infty).$$

In summary: Eqs.~(\ref{f_in_D}, \ref{error_control}) yield $\fz$ satisfying (ii) of Proposition \ref{Properties_of_sol_f}. The solutions $\fz$ and $f_+(z,\lambda)$, cf. (\ref{f_pm}), are decaying, respectively divergent in the sector $S_0$. Hence any solution of Eq.~(\ref{DE_turned}) decaying there agrees with $\fz$ up to a multiple. In particular $\fz$ is uniquely determined by (\ref{asympt}), proving (iii) and, by Remark~\ref{Rem8}, also (i).

We yet next turn to (iv) and to a preliminary, weaker form of (v). Instead of (\ref{dec}) we now use the decomposition
$$q(z)=z^3+\lambda_0, \qquad \tilde{q}(z)=\lambda-\lambda_0$$
with $|\arg \lambda_0|\leq \pi -\delta, \; (\delta>0)$. We apply Theorem~\ref{THM_6} to the sector $D=\lbrace z \mid|\arg{z}|<\delta/{3}\rbrace$, where $q$ nowhere vanishes, to $a=\infty$, and to $H(a)=(0,\infty)$. In fact, for $x>0$ the same path $\mathcal{P}_{+\infty,x}$ as before will be used, since it remains progressive because of $\Re q(x)^{{1}/{2}}>0$ and of
\begin{equation}
\label{xi}
\xi(x,\lambda_0)=\xi(0,\lambda_0)+\int_0^x q(t)^{{1}/{2}}dt.
\end{equation}
We compute 
$$F'(x,\lambda_0)=\frac{45}{16}(x^3+\lambda_0)^{-{5}/{2}}x^4-\frac{3}{2}(x^3+\lambda_0)^{-{3}/{2}}x - (\lambda-\lambda_0)(x^3+\lambda_0)^{-{1}/{2}}$$ and separate the first two terms from the third when estimating the variation:
\begin{gather*}
V_{\infty,x}(F)\leq v(x,\lambda_0)+|\lambda-\lambda_0|\tilde{v}(x,\lambda_0),\\
v(x,\lambda_0)\leq \frac{45}{16}\int_x^\infty \frac{t^4dt}{|t^3+\lambda_0|^{{5}/{2}}} +\frac{3}{2}\int_x^\infty \frac{tdt}{|t^3+\lambda_0|^{{3}/{2}}},\\
\tilde{v}(x,\lambda_0)\leq \int_x^\infty \frac{dt}{|t^3+\lambda_0|^{{1}/{2}}}.
\end{gather*}
We observe
\begin{gather*}
v(x,\lambda_0),\tilde{v}(x,\lambda_0)\to 0, \qquad (x\to \infty),\\
v(0,\lambda_0)=O(\lambda_0^{-5/6}), \quad\tilde{v}(0,\lambda_0)=O(\lambda_0^{-1/6}), \qquad (\lambda_0\to \infty).
\end{gather*}
The solution (\ref{f_in_D}) so constructed for $x>0$, 
\begin{gather*}
f(x,\lambda)=(x^3+\lambda_0)^{-{1}/{4}}\e^{-\xi(x,\lambda_0)}(1+\varepsilon(x,\lambda;\lambda_0)),\\
\varepsilon(x,\lambda;\lambda_0)\leq \e^{V_{\infty,x}(F)}-1,
\end{gather*}
decays as $x\to \infty$.
It thus agrees with (\ref{asympt}) up to a multiple, or even precisely provided $\xi(0,\lambda_0)$ is chosen appropriately, as we will do momentarily. Then $f(x,\lambda)$ is independent of $\lambda_0$, which can be chosen in various ways:
\begin{enumerate}[a)]
\item $\lambda_0=0$. This is the choice made first in (\ref{dec}) together with $\xi(0,0)=0$.
\item $\lambda_0$ with $|\arg \lambda_0|\leq \pi - \delta$. Thus $\xi(x,\lambda_0)-\xi(x,0) \to 0, \; (x\to \infty)$ and hence by (\ref{xi}) 
\begin{gather}\label{K2}
-\xi(0,\lambda_0)=\lim\limits_{x\to \infty}\bigl(\int_0^x \sqrt{t^3+\lambda_0}dt-\int_0^x \sqrt{t^3}dt\bigr)=K'\lambda_0^{{5}/{6}},\\
K'=\int_0^\infty (\sqrt{t^3+1}-\sqrt{t^3})dt.\nonumber
\end{gather}
We observe that $K=K'$, cf.~(\ref{K1}). In fact, let us compare the difference of (\ref{K2}) when the (not admissible) point $\lambda_0=-1$ is approached from the upper and the lower half--plane: In the middle expression, the first radicand changes sign on $0\le t\le 1$ and the difference is $\im K/2-(-\im K/2)=\im K$; on the r.h.s. it is $2\im K'\sin(5\pi/6)= \im K'$. 
\suspend{enumerate}
More specific choices of $\lambda_0$ are the following:
\resume{enumerate}[{[a)]}]
\item $\lambda_0=\lambda$ with $\lambda$ in the same sector. This proves (iv).
\item $\lambda_0>0$ fixed. Then 
\begin{gather*}
f(0,\lambda)=\lambda_0^{-{1}/{4}}\e^{K\lambda_0^{{5}/{6}}}(1+\varepsilon(0,\lambda;\lambda_0)),\\
\varepsilon(0,\lambda;\lambda_0)\leq \e^{v(0,\lambda_0)}\e^{|\lambda-\lambda_0|\tilde{v}(0,\lambda_0)}
\end{gather*}
and we conclude that 
\begin{equation}
\label{ineq_C1C2_1}
|f(0,\lambda)|\leq C_1\e^{C_2|\lambda|}, \qquad (\lambda\in \mathbb{C})
\end{equation}
for some $C_1,C_2>0$. We remark that the bound is weaker than the one claimed in Eq.~(\ref{upper_bondf}).
\end{enumerate}
It remains to prove part (v). This will again be done by means of the Phragm\'en-Lindel\"of principle. Let $G=\lbrace\lambda \mid |\arg{\lambda}-\pi|<\delta\rbrace, \;(\delta>0)$. Outside of $G$ the estimate follows from (iv), in fact with $K$ in place of $\tilde{K}$. In $G\cup\partial G$ we consider the analytic function 
$$F(\lambda)=f(0,\lambda)\e^{-\tilde{K}(-\lambda)^{{5}/{6}}},$$
which satisfies
$$|F(\lambda)|=|f(0,\lambda)|\e^{-\tilde{K}|\lambda|^{{5}/{6}}\cos{(({5}/{6})\arg{(-\lambda)})}}$$ 
with $|\arg(-\lambda)|\le\delta$. Then (\ref{ineq_F<MboundaryG}) (with $\lambda$ instead of $z$) holds for $\delta>0$ so small that $\tilde{K}\cos{(5\delta/6)}\geq K;$ while (\ref{ineq_F_C1C2}) does with $b=1$ by (\ref{ineq_C1C2_1}). The conclusion follows by Theorem~\ref{THM_10}.
\end{proof}

\noindent
{\bf Acknowledgments.} We thank A.~Maltsev and I.M.~Sigal for
discussions and a referee for stressing the algebraic simplicity of
the eigenvalues and for pointing out Refs.~\cite{trinh1, trinh2}.


\begin{thebibliography}{99}

\bibitem{bender} C.M. Bender, Making sense of non-Hermitian Hamiltonians, Rept. Prog. Phys. {\bf 70}, 947 (2007).

\bibitem{beboe} C.M. Bender and S. Boettcher, 
Real spectra in non-Hermitian Hamiltonians having $PT$-Symmetry,
Phys. Rev. Lett. {\bf 80}, 5243 (1998).

\bibitem{cali} E. Caliceti, S. Graffi, and M. Maioli, Perturbation theory of odd anharmonic oscillators, {\bf 75}, 51 (1980).

\bibitem{conway} J.B. Conway, {\it Functions of one complex variable 2ed.}, Springer, 1978. 

\bibitem{dorey} P. Dorey, C. Dunning, and R. Tateo, Spectral equivalences, Bethe ansatz equations, and reality properties in $PT$-symmetric quantum mechanics, J. Phys. A: Math. Gen. {\bf 34}, 5679 (2001).

\bibitem{grema} V. Grecchi and A. Martinez, The spectrum of the cubic oscillator,
 Commun. Math. Phys. {\bf 319}, 479 (2013).
 
\bibitem{henry} R. Henry, Spectral projections of the complex cubic oscillator, arXiv:1310.4629 [math.SP].

\bibitem{hu} W. Hunziker, Notes on asymptotic perturbation theory for Schr\"odinger eigenvalue problems, Helv. Phys. Acta {\bf 61}, 257 (1988).

\bibitem{kato2} T. Kato, On some Schr\"odinger operators with a singular complex potential, Ann. Scuola Norm. Sup. Pisa Cl. Sci. (4) {\bf 5}, 105 (1978).

\bibitem{kato} T. Kato, {\it Perturbation Theory for Linear Operators}, Springer, 1980.

\bibitem{most} A. Mostafazadeh, A critique of $PT$-symmetric quantum mechanics, arXiv:quant-ph/0310164.

\bibitem{olver} F.W.J. Olver, {\it Asymptotics and special functions}, Academic Press, 1974.

\bibitem{RS4} M. Reed and B. Simon, {\it Methods of modern mathematical physics. Vol. 4, Analysis of operators}, Academic Press (London), 1978.

\bibitem{shin} K.C. Shin, On the reality of the eigenvalues for a class of $PT$-symmetric oscillators, Commun. Math. Phys. {\bf 229}, 543 (2002).

\bibitem{sibuya} Y. Sibuya, {\it Global theory of a second-order linear ordinary differential equation with a polynomial coefficient}, North-Holland Publishing Company, 1975. 

\bibitem{sk} P. Siegl and D. Krej\v{c}i\v{r}\'{\i}k, On the metric operator for the imaginary cubic oscillator, 
Phys. Rev. D{\bf 86}, 121702(R), (2012).

\bibitem{trinh1} D.T. Trinh, Remarks on $PT$-pseudo-norm in
  $PT$-symmetric quantum mechanics, J. Phys. A: Math. Gen. {\bf 38},
  3665 (2005). 

\bibitem{trinh2} D.T. Trinh, On the simpleness of zeros of Stokes multipliers
J. Diff. Equ. {\bf 223}, 351 (2006).

\bibitem{zjj} J. Zinn-Justin and U.D. Jentschura, Imaginary cubic perturbation: Numerical and analytic study, 
J. Phys. A: Math. Theoret. {\bf 43}, 425301, (2010).


\end{thebibliography}
\end{document}